\newcommand{\cal}{\mathcal}
\newtheorem{theorem}{Theorem}[section]
\newtheorem{lemma}[theorem]{Lemma}
\newtheorem{remark}[theorem]{Remark}
\begin{document}

\title[On static Poincar\'e-Einstein metrics]{On static Poincar\'e-Einstein metrics}

\author{Gregory J Galloway}
\address{Dept of Mathematics, University of Miami, Coral Gables, Coral Gables, FL 33146, USA}
\email{galloway@math.miami.edu}

\author{Eric Woolgar}
\address{Department of Mathematical and Statistical Sciences,
University of Alberta, Edmonton, Alberta, T6G 2G1, Canada}
\address{Theoretical Physics Institute, University of Alberta, Edmonton, Alberta, T6G 2G1, Canada}
\email{ewoolgar@ualberta.ca}

\date{}

\begin{abstract}
\noindent The classification of solutions of the static vacuum Einstein equations, on a given closed manifold or an asymptotically flat one, is a long-standing and much-studied problem. Solutions are characterized by a complete Riemannian $n$-manifold $(M,g)$ and a positive function $N$, called the lapse. We study this problem on Asymptotically Poincar\'e-Einstein $n$-manifolds, $n\ge 3$, when the conformal boundary-at-infinity is either a round sphere, a flat torus or smooth quotient thereof, or a compact hyperbolic manifold. Such manifolds have well-defined Wang mass, and are time-symmetric slices of static, vacuum, asymptotically anti-de Sitter spacetimes. By integrating a mildly generalized form of an identity used by Lindblom, Shen, Wang, and others, we give a mass formula for such manifolds. There are no solutions with positive mass. In consequence, we observe that either the lapse is trivial and $(M,g)$ is Poincar\'e-Einstein or the Wang mass is negative, as in the case of time symmetric slices of the AdS soliton. As an application, we use the mass formula to compute the renormalized volume of the warped product $(X,\gamma)\simeq (M^3,g)\times_{N^2} (S^1,dt^2)$.

We also give a mass formula for the case of a metric that is static in the region exterior to a horizon on which the lapse function is zero. Then the manifold $(X,\gamma)$ is said to have a ``bolt'' where the $S^1$ factor shrinks to zero length. The renormalized volume of $(X,\gamma)$ is expected on physical grounds to have the form of the free energy per unit temperature for a black hole in equilibrium with a radiation bath at fixed temperature. When $M$ is 3-dimensional and admits a horizon, we apply this mass formula to compute the renormalized volume of $(X,\gamma)$ and show that it indeed has the expected thermodynamically motivated form.

We also discuss several open questions concerning static vacuum asymptotically Poincar\'e-Einstein  manifolds.
\end{abstract}

\maketitle

\section{Introduction}
\setcounter{equation}{0}

\noindent Let $(M,g)$ be a complete Riemannian $n$-manifold and consider positive solutions $N:M\to {\mathbb R}$ of the \emph{static Einstein system}
\begin{eqnarray}
\label{eq1.1} N{\rm Ric}&=& \nabla^2 N +\frac{2\Lambda}{(n-1)}Ng\ ,\\
\label{eq1.2} \Delta N &=& -\frac{2\Lambda}{(n-1)}N\ ,
\end{eqnarray}
where $\nabla$ is the Levi-Civita connection of $g$, $\nabla^2$ is the Hessian, $\Delta:={\rm tr}_g \nabla^2$ is the scalar Laplacian, ${\rm Ric}$ is the Ricci tensor of $g$, and $\Lambda$ is a constant called the \emph{cosmological constant}. Note that a consequence of the above equations is that the scalar curvature is
\begin{equation}
\label{eq1.3}
R=2\Lambda\ .
\end{equation}
When $\Lambda<0$, we can rescale the metric to obtain $\Lambda=-\frac12 n(n-1)$. Then the system (\ref{eq1.1}, \ref{eq1.2}) is equivalent to the equation
\begin{equation}
\label{eq1.4}
\nabla^2 N -g\Delta N-N{\rm Ric}=0\ .
\end{equation}

Solutions $(M,g,N)$ of this system are sometimes called \emph{Killing initial data sets}. If a positive-$N$ solution to this system can be found, then the $(n+1)$-dimensional spacetime $({\mathbb R}\times M, -N^2dt^2\oplus g)$ is negative Einstein and \emph{globally static}; that is, it admits a nowhere vanishing, hypersurface orthogonal, timelike Killing vector field $\frac{\partial}{\partial t}$. Solutions with nonnegative $N$ are also of interest. When the zero set of $N$ is a closed hypersurface in $M$, then $({\mathbb R}\times M, -N^2dt^2\oplus g)$ is a static exterior black hole metric. The zero set is totally geodesic in $(M,g)$ with ``surface gravity'' $|dN|$ constant on the zero set  \cite[Lemma 2.1(i)]{GM}.\footnote
{The paper \cite{GM} assumes that $n=3$, but the proof of the quoted result is easily seen to be valid for $n\ge 3$.}

A theorem quoted in Lichn\'erowicz \cite{Lichnerowicz}
%
%
states that if $(M,g)$ is a complete, asymptotically flat 3-manifold obeying (\ref{eq1.1}, \ref{eq1.2}) with $\Lambda=0$ and $N\to 1$ at infinity, then $(M,g)$ is Euclidean 3-space. This has an obvious proof. It also follows from the positive mass theorem \cite{SY, Witten}. In the case of closed 3-manifolds, nontrivial solutions of (\ref{eq1.1}, \ref{eq1.2}) have been found \cite{Kobayashi, Lafontaine} and have been used as counter-examples to the Fischer-Marsden \cite{FM} conjecture, which held that no nontrivial solutions would exist except those about which the linearized scalar curvature operator is surjective. Shen \cite{Shen} found that Kobayashi's and Lafontaine's nontrivial solutions always contained a totally geodesic 2-sphere (that is, a 2-sphere \emph{horizon} in the sequel).

In this paper, we consider positive and nonnegative solutions of (\ref{eq1.4}) on manifolds which admit a notion of conformal infinity and have sectional curvatures approaching $-1$ sufficiently rapidly there. In particular, we take $(M,g)$ to be Asymptotically Poincar\'e-Einstein (APE). Such manifolds admit a conformal infinity which is the zero set of a local coordinate $x$ called a \emph{special defining function}, which obeys ${\tilde g}^{-1}(dx,dx)=1$ in a neighbourhood of conformal infinity, where ${\tilde g}:=x^2 g$. Thus, on such a neighbourhood, $x$ is a Gaussian normal coordinate for the conformally rescaled metric ${\tilde g}$. Furthermore, in this coordinate system, the Einstein equations are enforced order-by-order on the coefficients ${\tilde a}_{[p]}$ in the expansion ${\tilde g}=\sum_p {\tilde a}_{[p]}x^p$ up to (but not including) order $x^n$. This condition fully determines ${\tilde a}_{[0]}, \dots, {\tilde a}_{[n-2]}$ and ${\rm tr}_{{\tilde a}_{[0]}}{\tilde a}_{[n-1]}$ in terms of the \emph{Dirichlet data} ${\tilde g}(x=0)\equiv {\tilde a}_{[0]}$ and, in fact, up to this order the odd coefficients ${\tilde a}_{[2p+1]}$ vanish. This is known as an even Fefferman-Graham expansion; see \cite{BMW} for more detail.

When conformal infinity carries either the round sphere metric, a compact flat metric, or a compact hyperbolic metric, APEs have well-defined Wang mass \cite{Wang1}. If an APE is exactly Poincar\'e-Einstein, and if its conformal infinity is one of the above types which admit a Wang mass, then that mass is zero \cite{AD}. In this note, we generalize that result as follows:

\begin{theorem}\label{theorem1.1}
Let $(M,g)$ be a complete APE with conformal infinity either a round $(n-1)$-sphere, flat $(n-1)$-torus or a smooth quotient thereof, or compact hyperbolic $(n-1)$-manifold. Let $N>0$ solve equation (\ref{eq1.4}) with $|dN|\to 1$ on approach to infinity.
\begin{enumerate}
\item[{\bf (a)}] If $\partial M$ is empty, then the Wang mass of $g$ is given by
\begin{equation}
\label{eq1.5}m=-\frac{1}{8\pi(n-2)}\int_M N\left \vert Z \right \vert^2 dV(g)\ ,
\end{equation}
where\footnote
{Since by (\ref{eq1.3}) the Ricci scalar is constant, then $Z$ as given by (\ref{eq1.6}) equals the tracefree Ricci tensor.}
\begin{equation}
\label{eq1.6} Z:={\rm Ric} +(n-1)g\ .
\end{equation}
\item[{\bf (b)}] If $M$ has a non-empty boundary $\partial M=:{\cal H}=\sqcup_i {\cal H}_{i}$ comprised of finitely many disjoint compact connected components ${\cal H}_{i}$ such that $N\equiv 0$ on ${\cal H}$ then
\begin{equation}
\label{eq1.7}m=-\frac{1}{8\pi(n-2)}\int_M N\left \vert Z \right \vert^2 dV(g)+\sum_i\frac{\vartheta_i}{16\pi} \left [ (n-1) \vert {\cal H}_{i}\vert +\frac{1}{(n-2)}\int_{{\cal H}_{i}}S dV_{{\cal H}_{i}}\right ]\ ,
\end{equation}
where $\vert {\cal H}_{i}\vert$ is the surface area of ${\cal H}_{i}$, $S$ is the intrinsic scalar curvature of ${\cal H}_{i}$, and $\vartheta_i=\vert dN \vert_{{\cal H}_{i}}$ is a constant on ${\cal H}_{i}$, known as the \emph{surface gravity}.
\begin{enumerate}
\item[{\bf (i)}] In the special case of a \emph{cold horizon}, defined by $\vartheta=0$, the mass is nonpositive and we recover (\ref{eq1.5}).
\item[{\bf (ii)}] In the $n=3$ case, we have
\begin{equation}
\label{eq1.8}m=-\frac{1}{8\pi}\int_M N\left \vert Z \right \vert^2 dV(g)+\sum_i\frac{\vartheta_i}{8\pi}\left ( \vert {\cal H}_{i}\vert +2\pi \chi({\cal H}_{i})\right )\ ,
\end{equation}
where $\chi({\cal H}_{i})=2(1-g_i)$ is the Euler characteristic of ${\cal H}_{i}$, and $g_i$ is the genus of ${\cal H}_{i}$.
\end{enumerate}
\end{enumerate}
\end{theorem}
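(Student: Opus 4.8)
The plan is to convert the static equation into a pointwise divergence identity for $N\vert Z\vert^2$ and then integrate. First I would use that, by (\ref{eq1.3}), the scalar curvature $R=2\Lambda=-n(n-1)$ is constant, so that $Z$ defined in (\ref{eq1.6}) is exactly the trace-free Ricci tensor and, by the contracted second Bianchi identity, is divergence-free: $\nabla^j Z_{ij}=\frac12\nabla_i R=0$. Taking the trace of (\ref{eq1.4}) gives $\Delta N=nN$, and substituting this back recasts (\ref{eq1.4}) as
\[
\nabla^2 N-Ng=NZ,
\]
that is, $NZ_{ij}=\nabla_i\nabla_j N-Ng_{ij}$. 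Contracting against $Z^{ij}$ and using ${\rm tr}_g Z=0$ produces the pointwise identity $N\vert Z\vert^2=Z^{ij}\nabla_i\nabla_j N$. Because $Z$ is divergence-free, the right-hand side is a pure divergence, $N\vert Z\vert^2=\nabla_i(Z^{ij}\nabla_j N)={\rm div}(W)$ with $W_i:=Z_{ij}\nabla^j N$. This is the ``mildly generalized'' Lindblom--Shen--Wang identity referred to in the abstract.

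Next I would integrate this identity over $M$ (or over the region $\{x\ge\epsilon\}$ exterior to the horizon in case (b)) and apply the divergence theorem, so that the boundary splits into a piece at conformal infinity and, in case (b), the horizon ${\cal H}$:
\[
\int_M N\vert Z\vert^2\, dV=\lim_{\epsilon\to 0}\int_{\{x=\epsilon\}} Z(\nabla N,\nu)\, dA-\sum_i\int_{{\cal H}_i}Z(\nabla N,\hat n)\, dA,
\]
with $\nu$ the outward normal at infinity and $\hat n$ the unit normal to ${\cal H}_i$ pointing into $M$. The remaining work is to evaluate the two boundary contributions.

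For the horizon term, on ${\cal H}_i$ we have $N\equiv 0$, so $\nabla N$ is normal to the level set and $\nabla N=\vartheta_i\hat n$ with $\vartheta_i=\vert dN\vert$ constant; hence $Z(\nabla N,\hat n)=\vartheta_i Z(\hat n,\hat n)$. Since the horizon is totally geodesic, the trace Gauss equation reduces to $S=R-2{\rm Ric}(\hat n,\hat n)$, so that $Z(\hat n,\hat n)={\rm Ric}(\hat n,\hat n)+(n-1)=-\frac12\bigl[(n-1)(n-2)+S\bigr]$ after inserting $R=-n(n-1)$. This produces exactly the bracketed surface-gravity term in (\ref{eq1.7}); setting $\vartheta_i=0$ gives the cold-horizon case (b)(i), and in $n=3$ the Gauss--Bonnet identity $\int_{{\cal H}_i}S\, dA=4\pi\chi({\cal H}_i)$ (using $S=2K$) collapses the bracket to the form (\ref{eq1.8}). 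When $\partial M=\emptyset$ only the term at infinity survives, giving (\ref{eq1.5}).

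The main obstacle is the evaluation of the limit at conformal infinity and its identification with the Wang mass. Here I would insert the even Fefferman--Graham expansion, use that $\vert dN\vert\to 1$ forces $N\sim 1/x$, and track the orders at which $Z$ (equivalently the deviation of the expansion from Einstein) and $\nabla N$ enter relative to the induced area element on $\{x=\epsilon\}$. The Einstein equations being enforced through order $x^{n-1}$, together with the vanishing of the odd coefficients, should guarantee both that $\int_M N\vert Z\vert^2\, dV$ converges and that the boundary flux has a finite limit determined solely by ${\rm tr}_{{\tilde a}_{[0]}}{\tilde a}_{[n]}$. Matching this limit to the normalization in Wang's definition of $m$ is what yields the precise constant $-\frac{1}{8\pi(n-2)}$ and closes the proof; this asymptotic bookkeeping, rather than any of the geometric identities above, is where the real care is required.
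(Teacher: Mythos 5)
Your strategy coincides with the paper's. The vector field $W_i=Z_{ij}\nabla^jN$ whose divergence you compute is exactly the quantity $\tfrac{1}{2N}\nabla\left(|dN|^2-N^2+k\right)$ of Lemma \ref{lemma3.1} (compare the first two lines of (\ref{eq3.2})), so your pointwise identity $N|Z|^2={\rm Div}\,W$ is identity (\ref{eq3.1}) up to a factor of $2$; and your handling of the horizon term --- $\nabla N=\vartheta_i\hat n$ with $\vartheta_i$ constant, total geodesy, the traced Gauss equation $2Z(\hat n,\hat n)=-(n-1)(n-2)-S$, and Gauss--Bonnet when $n=3$ --- reproduces the proof of Part (b), including the correct constants in (\ref{eq1.7}) and (\ref{eq1.8}). (Do note that total geodesy of ${\cal H}$ and local constancy of $|dN|$ there are not free; the paper imports them from \cite{GM}.)

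The one genuine gap is the step you explicitly defer: evaluating $\lim_{\epsilon\to0}\int_{\{x=\epsilon\}}Z(\nu,\nabla N)\,dV(h)$ and identifying it with $-8\pi(n-2)m$. This is not peripheral bookkeeping --- it is the bulk of the paper's proof of Part (a) --- and it conceals a cancellation that must actually be exhibited. Writing $Z(\nu,\nabla N)=Z(\nu,\nu)\,|\nabla N|\left(1+{\cal O}(\epsilon)\right)$ and computing $Z(\nu,\nu)$ by Gauss--Codazzi on the level set, the extrinsic contribution $\tfrac12\left(R+H^2-|A|^2\right)+(n-1)$ and the intrinsic contribution $-\tfrac12 S[h]$ are each of order $\epsilon^2$, and only their exact mutual cancellation (equations (\ref{eq3.10})--(\ref{eq3.12})) leaves the mass aspect surviving: $Z(\nu,\nu)=-\tfrac12(n-2)\left({\rm tr}_{g^{(k)}}\kappa\right)\epsilon^n+{\cal O}(\epsilon^{n+1})$. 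This must then be balanced against $|\nabla N|\sim1/\epsilon$ and $dV(h)\sim\epsilon^{-(n-1)}dV(g^{(k)})$ so that the powers of $\epsilon$ cancel and the limit is $-\tfrac12(n-2)\int{\rm tr}_{g^{(k)}}\kappa\,dV(g^{(k)})=-8\pi(n-2)m$. Be careful with your assertion that ``$|dN|\to1$ forces $N\sim1/x$'': for $N\sim1/x$ one has $|dN|_g\sim1/x\to\infty$ (the relevant normalized statement is $|dN|/N\to1$, cf.\ (\ref{eq3.20})), and if one literally took $|\nabla N|\to1$ in the flux the limit would vanish and the theorem would fail. Until the ${\cal O}(\epsilon^2)$ cancellation and these precise powers of $\epsilon$ are verified, neither the convergence of $\int_MN|Z|^2\,dV(g)$ nor the constant $-\tfrac{1}{8\pi(n-2)}$ is established.
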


This also generalizes a result of Chru\'sciel and Simon \cite{CS}, who observed that $m<0$ for complete solutions of (\ref{eq1.1}, \ref{eq1.2}) with APE asymptotics in the particular case of $n=3$ dimensions and compact hyperbolic conformal infinity. We obtain our result essentially by following a computation of Wang \cite{Wang2}, who was concerned with the $k=1$ case. He was able to show that amongst $k=1$ APEs with spinor structure, the argument that leads us to Theorem \ref{theorem1.1}, when combined with the positive mass theorem, implies that anti-de Sitter spacetime is the unique complete spin manifold solving of (\ref{eq1.1}, \ref{eq1.2}) with $N>0$ globally.

Examples of solutions of the system (\ref{eq1.1}, \ref{eq1.2}) with cold horizons are provided by the extreme ``topological'' black holes described in (\cite{Mann}, \cite{BLP}). We are concerned with vacuum metrics only, and therefore while charged extreme black holes also admit cold horizons, only uncharged cold horizons provide examples for our theorem.\footnote
{In particular, an example is obtained by setting $k=b=-1$, $q=0$, and $m=-3^{-3/2}\ell$ in equation (7) of \cite{Mann}.}
We also note that examples of negative mass complete solutions of (\ref{eq1.1}, \ref{eq1.2}) with empty $\partial M$ are known. They are time-symmetric slices of so-called AdS solitons, and are discussed briefly in section 5.

In the horizon-free case, we have $N>0$, so we can consider the Riemannian warped product Poincar\'e-Einstein metric $\gamma:=N^2dt^2\oplus g$ on $X\simeq S^1\times M$. In this case, our mass formula (\ref{eq1.7}) yields a novel application. By a simple calculation outlined in Section 4, Theorem \ref{theorem1.1} implies that the $L^2$ norm of the Riemann tensor of $\gamma$, renormalized by subtraction of a dimension-dependent constant, equals the mass of $(M,g)$. In particular, if $n=3$, then this observation can be used to determine the \emph{renormalized volume} \cite{HS, Graham} ${\rm RenV}(X,\gamma)$, via a formula of Anderson \cite{Anderson}, in terms of the Wang mass $m$ of $(M,g)$.

This application of the mass formula becomes more interesting when a horizon is present.  If, in four spacetime dimensions, we assume the horizon is connected then, under physically natural circumstances \cite{EGP, BG},\footnote
{While \cite{EGP} discusses only the asymptotically flat case, its analysis is valid in the present case as well.}
$M$ must have topology $[a,\infty)\times {\cal H}$ for some $a>0$ and some surface $\Sigma$, and hence $X$ will have topology $X\simeq {\mathbb R}^2\times {\cal H}$. The horizon then contributes a boundary term, leading to a strikingly simple formula for ${\rm RenV}(X,\gamma)$ with an obvious interpretation in black hole thermodynamics.

\begin{theorem}\label{theorem1.2} Let $(X,\gamma)$ be a Poincar\'e-Einstein 4-manifold, with a hypersurface-orthogonal Killing vector $K=\frac{\partial}{\partial t}$.
\begin{enumerate}
\item[{\bf (a)}] If $(X,\gamma)=(S^1\times M^3, N^2dt^2\oplus g)$ such that $(M^3,g)$ is complete, is APE with conformal infinity having constant sectional curvature $k\in \{ -1,0,1\}$, and has Wang mass $m$, then the renormalized volume of $(X,\gamma)$ is given by
\begin{equation}
\label{eq1.9}
{\rm RenV}(X,\gamma)= \frac{8\pi}{3}m\beta \le 0 \ .
\end{equation}
where $\beta$ is the circumference of a Killing orbit at infinity as measured in the conformal metric.
\item[{\bf (b)}] If $M^3\simeq [a,\infty)\times {\cal H}$ for some $a>0$ and connected manifold ${\cal H}$, and if $\{a\}\times {\cal H}$ is the zero set of $N$, then $X\simeq {\mathbb R}^2\times {\cal H}$ and the renormalized volume of $(X,\gamma)$ is given by
\begin{equation}
\label{eq1.10}
{\rm RenV}(X,\gamma)= \frac{8\pi}{3} \left [ m\beta-\frac14 \vert{\cal H}\vert \right ]  \ .
\end{equation}
\end{enumerate}
\end{theorem}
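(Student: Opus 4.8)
The plan is to combine Anderson's renormalized-volume formula for four-dimensional Poincar\'e-Einstein manifolds with the mass formula of Theorem \ref{theorem1.1}. Recall that for an oriented Poincar\'e-Einstein $4$-manifold normalized so that ${\rm Ric}(\gamma)=-3\gamma$, Anderson's formula has the form
\begin{equation*}
{\rm RenV}(X,\gamma)=\frac{4\pi^2}{3}\chi(X)-\frac16\int_X \vert W(\gamma)\vert^2\, dV_\gamma\ ,
\end{equation*}
where $W(\gamma)$ is the Weyl tensor and $\chi(X)$ is the Euler characteristic of the conformal compactification. First I would record that since $\gamma$ is Einstein its full curvature tensor differs from $W(\gamma)$ only by a scalar part built from $R$ and $\gamma$, and that part is constant because $R=-12$; hence $\vert W(\gamma)\vert^2=\vert {\rm Rm}(\gamma)\vert^2-\vert {\rm Rm}(\mathbb{H}^4)\vert^2$, which identifies the convergent integral $\int_X\vert W\vert^2\,dV_\gamma$ with the renormalized $L^2$ norm of the Riemann tensor referred to in the discussion preceding the statement.

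The central computation is to evaluate $\int_X\vert W(\gamma)\vert^2\,dV_\gamma$ for the static warped product $\gamma=N^2dt^2\oplus g$. Using the standard warped-product curvature identities and eliminating the Hessian of $N$ through the static equation (\ref{eq1.4}), I expect the Weyl norm to reduce to a pointwise multiple of $\vert Z\vert^2$ on $(M^3,g)$, where $Z$ is the tracefree Ricci tensor of (\ref{eq1.6}); since $(M^3,g)$ is three-dimensional its own Weyl tensor vanishes, so all of its curvature information is carried by $Z$. Because the integrand is $t$-independent and $dV_\gamma=N\,dt\wedge dV(g)$, integrating out the Killing circle should convert this into $\int_X\vert W\vert^2\,dV_\gamma = 2\beta\int_M N\vert Z\vert^2\,dV(g)$, with the factor $\beta$ emerging as the conformal circumference $\beta=(\text{coordinate period})\cdot\lim_{x\to0}(xN)$ of a Killing orbit at infinity.

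For part \textbf{(a)} the manifold is $X\simeq S^1\times M^3$, so $\chi(X)=\chi(S^1)\chi(\bar M)=0$ and the topological term drops out entirely. Substituting Theorem \ref{theorem1.1}(a), namely $\int_M N\vert Z\vert^2\,dV(g)=-8\pi m$ when $n=3$, then gives ${\rm RenV}(X,\gamma)=-\tfrac16\cdot 2\beta\cdot(-8\pi m)=\tfrac{8\pi}{3}m\beta$; the inequality ${\rm RenV}\le 0$ is immediate from $m\le 0$ and $\beta>0$. For part \textbf{(b)} the Killing circle pinches to zero at the horizon, so $X\simeq\mathbb{R}^2\times\mathcal{H}$ is homotopy equivalent to $\mathcal{H}$ and $\chi(X)=\chi(\mathcal{H})$. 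Here I would substitute the $n=3$ mass formula with horizon term (\ref{eq1.8}), $\int_M N\vert Z\vert^2=-8\pi m+\vartheta\big(\vert\mathcal{H}\vert+2\pi\chi(\mathcal{H})\big)$, and invoke the regularity of $\gamma$ at the bolt: smoothness of the capped-off $(t,\text{radial})$-plane forces the period of $t$ to be $2\pi/\vartheta$, equivalently $\beta\vartheta=2\pi$. Assembling the terms, the contribution $-\tfrac16\cdot2\beta\cdot2\pi\vartheta\,\chi(\mathcal{H})=-\tfrac{4\pi^2}{3}\chi(\mathcal{H})$ cancels the Anderson topological term $\tfrac{4\pi^2}{3}\chi(\mathcal{H})$ exactly, while the area term becomes $-\tfrac16\cdot2\beta\cdot\vartheta\vert\mathcal{H}\vert=-\tfrac{2\pi}{3}\vert\mathcal{H}\vert=-\tfrac{8\pi}{3}\cdot\tfrac14\vert\mathcal{H}\vert$, yielding (\ref{eq1.10}).

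The step I expect to be most delicate is the behaviour at the bolt. On one hand, Anderson's formula presupposes that $(X,\gamma)$ is a genuinely smooth Poincar\'e-Einstein $4$-manifold, so the absence of a conical singularity at the bolt must be established, and this is exactly the relation $\beta\vartheta=2\pi$; on the other hand, that same relation is what produces the clean cancellation of the Euler-characteristic terms and turns the horizon's area into the thermodynamically suggestive $-\tfrac14\vert\mathcal{H}\vert$. A secondary obstacle is pinning down the exact numerical constant in the pointwise identity relating $\vert W(\gamma)\vert^2$ to $\vert Z\vert^2$, together with the precise asymptotic normalization of $N$ entering $\beta$; these must be tracked carefully so that the constants in (\ref{eq1.9}) and (\ref{eq1.10}) come out as stated.
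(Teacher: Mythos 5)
Your proposal is correct and follows essentially the same route as the paper: Anderson's formula, the warped-product curvature computation reducing the renormalized curvature integrand to a multiple of $\vert Z(g)\vert^2$ (the paper gets $8\vert Z\vert^2$ in tensor norms with coefficient $\tfrac{1}{24}$, matching your $2\vert Z\vert^2$ with Anderson's form-norm coefficient $\tfrac16$, so $-\tfrac{\beta}{3}\int_M N\vert Z\vert^2\,dV(g)$ either way), substitution of Theorem \ref{theorem1.1}, and the bolt-regularity relation $\beta\vartheta=2\pi$ producing exactly the cancellation of the Euler-characteristic terms and the $-\tfrac14\vert\mathcal{H}\vert$ you describe. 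The only caution is the norm convention: your displayed identity $\vert W(\gamma)\vert^2=\vert{\rm Rm}(\gamma)\vert^2-\vert{\rm Rm}(\mathbb{H}^4)\vert^2$ is the tensor-norm statement, which pairs with $\tfrac{1}{24}$ rather than $\tfrac16$; as you anticipated, this bookkeeping must be fixed consistently, but your final constants agree with the paper's.
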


Equation (\ref{eq1.10}) is in fact a familiar thermodynamic formula. Consider a static black hole of mass $m$ with horizon ${\cal H}$ in equilibrium with radiation at temperature $T=1/\beta$ (the canonical ensemble). Then the expectation value of the energy of the system is $\langle E \rangle=m$ and the entropy of the system is famously given by $S=\frac14 |{\cal H}|$. One expects then to have the formula
\begin{equation}
\label{eq1.11} I = \frac{1}{T} \langle E \rangle -S = \beta m -\frac14 |{\cal H}|\ ,
\end{equation}
where $I$ is the gravitational action $-\frac{1}{32\pi}\int_X R_{\gamma} dV(\gamma)$ of the black hole metric, Wick rotated to Riemannian signature. From equation (\ref{eq1.11}), $I$ plays the role of the \emph{free energy per unit temperature}. By comparing this quantity, evaluated on different static metrics with the same value of $\beta$, one can construct a free energy diagram and quantify the energy liberated in phase transitions between these metrics, as well as any energy barriers to be overcome as a phase transition proceeds. However, since $R=-12$ for a Poincar\'e-Einstein 4-metric, we have $I=\frac{3}{8\pi}{\rm vol}(X)$, which is divergent. The renormalized volume was introduced as a method of rendering the action finite and well-defined \cite{HS}. It is therefore to be expected on physical grounds, though from a purely geometric perspective it appears startling, that equation (\ref{eq1.10}) yields
\begin{equation}
\label{eq1.12}
\frac{3}{8\pi}{\rm RenV}(X,\gamma) = \frac{1}{T}\langle E \rangle -S\ .
\end{equation}
In summary:

\begin{remark}[Thermodynamic interpretation of renormalized volume]\label{remark1.3}
When $(X,\gamma)$ is as described in Theorem \ref{theorem1.2}.(b), the renormalized volume ${\rm RenV}(X,\gamma)$ equals the free energy per unit temperature of the static black hole got by Wick rotating $(X,\gamma)$, in equilibrium with radiation at temperature $T=1/\beta$.
\end{remark}

It is not at all clear whether this interpretation can be extended to static black holes in the presence of matter, such as the Reissner-Nordstrom-AdS family. This issue is under investigation.

This paper is organized as follows. In Section 2, we recall Asymptotically Poincar\'e-Einstein boundary conditions and the Wang mass. In Section 3, we derive a simple identity of divergence form and integrate it over the manifold to prove Theorem \ref{theorem1.1}. Much of this section follows the argument given first by Wang \cite{Wang2} in a less general context, which was key to his uniqueness proof for anti-de Sitter spacetime. In section 3.3, we depart from this and use a different method based on the maximum principle to prove that, in the setting of Theorem \ref{theorem1.1}.(a), the mass \emph{aspect} is pointwise nonpositive. In Section 4, we prove Theorem \ref{theorem1.2}. We give a nontrivial example of Part (a) of that theorem in Section 4.3. In Section 5, we discuss several open problems for static APE manifolds, some of which are highly nontrivial.

\medskip
\noindent\emph{Acknowledgements.} The work of GJG was  supported by NSF grant DMS--1313724 and by a grant from the Simons Foundation (Grant No 63943). The work of EW was supported by NSERC Discovery Grant RGPIN 203614. Both authors wish to express their gratitude to the University of Science and Technology of China for hosting the 2013 Conference on Geometric Analysis and Relativity, at which this work was conceived, and to Piotr Chru\'sciel for a discussion of \cite{CS} at that time. EW thanks Don Page for discussions on black hole thermodynamics.

\section{APEs and Wang's mass}
\setcounter{equation}{0}

\noindent The metrics we consider must meet three criteria. First, they must be \emph{conformally compactifiable}, meaning that they admit a notion of conformal infinity defined as the locus $x=0$, to which the conformal metric ${\tilde g}:=x^2 g$ extends. Second, we require that $|dx|_{\tilde g}=1$ at conformal infinity. The $C^2$ smoothness of the conformal metric (which we will take to be $C^{\infty}$) allows this, and it follows that the sectional curvatures of $g$ must asymptote to $-1$, so such metrics are called \emph{asymptotically hyperbolic}. We can then extend the condition $|dx|_{\tilde g}=1$ to a neighbourhood of conformal infinity since $|dx|_{\tilde g}=1$ is a non-characteristic first-order differential equation, whose local solution $x$ therefore exists. This yields a Gaussian normal coordinate system for that neighbourhood. Then $x$ is called a \emph{special defining function} for conformal infinity. And third, the metric must have a well-defined mass.

The mass of asymptotically hyperbolic manifolds was first defined by Wang \cite{Wang1} in the special case where conformal infinity was a round sphere, but it easily generalizes to the three cases listed in the Introduction. We will index these cases by $k$, the sectional curvature of the conformal boundary-at-infinity $\partial_{\infty}M$, so that $k=1$ represents the case where $\partial_{\infty}M$ is the round metric $g^{(+1)}:=g(S^{n-1},{\rm can})$, $k=0$ represents the case where $\partial_{\infty}M$ carries a flat torus metric $g^{(0)}:=\delta$, and $k=-1$ denotes the case where $\partial_{\infty}M$ is a compact hyperbolic manifold with metric $g^{(-1)}$. Specifically, we now require that
\begin{equation}
\begin{split}
\label{eq2.1} g=&\, \frac{1}{f_{(k)}^2(r)} \left ( dr^2+g^{(k)}+\frac{1}{n}\kappa r^n+{\cal O}(r^{n+1}) \right )\ ,\\
f_{(k)}=&\, \begin{cases} \sinh r,& k=+1,\\ r,& k=0,\\ \sin r, & k=-1\ .\end{cases}
\end{split}
\end{equation}
For such metrics, the Wang mass is defined to be
\begin{equation}
\label{eq2.2} m:=\frac{1}{16\pi}\int_{\partial_{\infty}M} {\rm tr}_{g^{(k)}}\kappa\ dV(g^{(k)})\ ,
\end{equation}
where $\kappa$ is a symmetric $(0,2)$-tensor on $\partial_{\infty}M$.

The pre-factor $\frac{1}{16\pi}$ does not appear in \cite{Wang1}. We include it so as to agree with the mass used in $(3+1)$-dimensional asymptotically anti-de Sitter general relativity. If Newton's constant $G$ is not set to $1$, the normalization would then be $\frac{1}{16\pi G}$. An alternative normalization would be to divide $m$ by $4{\rm vol}_{g^{(k)}}$, which is of course $16\pi$ when $n=3$ and $k=1$. This, however, would have a disadvantage in the $k=0$ case where there are non-isometric Horowitz-Myers geons (time-symmetric slices of AdS solitons \cite{HM}) whose normalized masses would then be the same (cf \cite[Section 1]{BW}), so we will not do this.

We note here that conformal infinity is the locus $r=0$ but $r$ is not a special defining function since $|dr|_{r^2g}\neq 1$ on any open domain $r<\epsilon$. To obtain a special defining function, we solve
\begin{equation}
\label{eq2.3}\frac{dx}{x}=\frac{dr}{f_{(k)}}\ ,
\end{equation}
subject to the condition that $x= 0$ when $r=0$. Then the metric (\ref{eq2.1}) can be written as
\begin{equation}
\label{eq2.4}
g=\frac{1}{x^2} \left [ dx^2 + \left ( 1-kx^2/4\right )^2 g^{(k)} +\frac{1}{n}\kappa x^n +{\cal O}(x^{n+1})\right ]\ .
\end{equation}
This form is precisely what one obtains by following the Fefferman-Graham \cite{FG} method of applying the Einstein equations (for $g\equiv {\tilde g}/x^2$) order-by-order, up to order $x^{n-1}$ inclusive, to the formal expansion  ${\tilde g}= \sum a_{[n]}x^n$, subject to the Dirichlet condition ${\tilde g}(0) \equiv a_{[0]} =g^{(k)}$. Therefore, the metrics we consider are precisely the \emph{Asymptotically Poincar\'e-Einstein} metrics (APEs, see \cite{BMW}) with one of the constant curvature conformal infinities.

For use in the sequel, we note that the shape operator of hypersurfaces of constant $x$ is easy to compute from (\ref{eq2.4}). Computed with respect to the \emph{inward-pointing} normal vector field $\nu:=x\frac{\partial}{\partial x}$, it has components
\begin{equation}
\label{eq2.5}
A^{\alpha}{}_{\beta}= - \left ( \frac{1+\frac{kx^2}{4}}{1-\frac{kx^2}{4}}\right ) \delta^{\alpha}_{\beta}
+\frac12 \kappa^{\alpha}{}_{\beta}x^n +{\cal O}(x^n)\ ,
\end{equation}
where the Greek indices run over the tangent space to $\partial_{\infty}M$ so that $\alpha,\beta\in \{ 2,\dots,n\}$,  $\delta^{\alpha}_{\beta}$ denotes the components of the $(n-1)\times (n-1)$ identity matrix, and $\kappa^{\alpha}{}_{\beta}:=g^{(k)\alpha\gamma}\kappa_{\gamma\beta}$. The mean curvature of these hypersurfaces is then
\begin{equation}
\label{eq2.6}
H={\rm tr} A = -(n-1)\left ( \frac{1+\frac{kx^2}{4}}{1-\frac{kx^2}{4}}\right )+\frac12 \left ( {\rm tr}_{g^{(k)}} \kappa \right ) x^n +{\cal O}(x^{n+1})\ .
\end{equation}

\section{Proof of Theorem \ref{theorem1.1}}
\setcounter{equation}{0}

\subsection{Divergence identity}

\begin{lemma}\label{lemma3.1} If $N$ is a solution of (\ref{eq1.1}, \ref{eq1.2}) then
\begin{equation}
\label{eq3.1} {\rm Div} \left [ \frac{1}{N}\nabla\left ( |dN|^2-N^2+k \right ) \right ] = 2N |Z|^2\ .
\end{equation}
\end{lemma}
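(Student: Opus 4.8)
The plan is to reduce everything to the trace-free Ricci tensor $Z$ and to exploit its two key properties: it is both trace-free and divergence-free. First I would fix the normalization $\Lambda = -\frac12 n(n-1)$, so that $\frac{2\Lambda}{n-1} = -n$ and equation (\ref{eq1.1}) reads $\nabla^2 N = N\,{\rm Ric} + nNg$. Since $k$ is a constant, $\nabla(|dN|^2 - N^2 + k) = \nabla(|dN|^2 - N^2)$, so the additive constant plays no role in the identity itself; it will matter only for the boundary terms when this identity is later integrated over $M$.

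Next I would compute the one-form $\frac{1}{N}\nabla(|dN|^2 - N^2)$ explicitly. Writing $W := |dN|^2 - N^2$, one has $(\nabla W)_a = 2(\nabla^2 N)_{ab}\nabla^b N - 2N\nabla_a N$. Substituting $\nabla^2 N = N\,{\rm Ric} + nNg$ and dividing by $N$ yields $\frac{1}{N}(\nabla W)_a = 2{\rm Ric}_{ab}\nabla^b N + 2(n-1)\nabla_a N$. The crucial simplification is that, upon replacing ${\rm Ric} = Z - (n-1)g$, the two terms proportional to $(n-1)\nabla_a N$ cancel exactly, leaving the clean expression $\frac{1}{N}(\nabla W)_a = 2Z_{ab}\nabla^b N$. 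This collapse to a pure $Z$-contraction is the heart of the calculation.

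Finally I would take the divergence of $2Z_{ab}\nabla^b N$, which splits as $2(\nabla^a Z_{ab})\nabla^b N + 2Z_{ab}\nabla^a\nabla^b N$. For the first term, note that $\nabla^a Z_{ab} = \nabla^a {\rm Ric}_{ab} = \frac12\nabla_b R$ by the contracted second Bianchi identity, and this vanishes because $R = 2\Lambda$ is constant by (\ref{eq1.3}); thus the divergence-of-$Z$ term drops out. For the second term, I would again use $\nabla^2 N = N\,{\rm Ric} + nNg = N(Z + g)$, so that $Z_{ab}\nabla^a\nabla^b N = N|Z|^2 + N\,{\rm tr}(Z)$, and then invoke ${\rm tr}(Z) = R + n(n-1) = 0$ to discard the trace term. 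Assembling the pieces gives exactly ${\rm Div}\left[\frac{1}{N}\nabla(|dN|^2 - N^2 + k)\right] = 2N|Z|^2$.

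There is no genuine analytic obstacle here: the entire content is algebraic and local, valid pointwise wherever $N > 0$. The only points requiring care are bookkeeping ones, namely getting the normalization of $\Lambda$ right, tracking the cancellation of the $(n-1)\nabla_a N$ terms after the division by $N$, and recognizing that both the trace-free and divergence-free properties of $Z$ are simply consequences of the constancy of $R$ (which in turn follows from the static equations). I would double-check the sign conventions for $\nabla^2$ and for the Bianchi identity, since those are the likeliest sources of an error that would spoil the factor of $2$ or the sign on the right-hand side.
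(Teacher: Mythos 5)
Your proof is correct and follows essentially the same route as the paper's: both reduce the one-form $\frac{1}{N}\nabla\left(|dN|^2-N^2+k\right)$ to $2Z(\nabla N,\cdot)$ via the static equation, kill the $\nabla^a Z_{ab}$ term with the contracted Bianchi identity and the constancy of $R$, and then use $\nabla^2 N = N(Z+g)$ together with ${\rm tr}\,Z=0$ to land on $2N|Z|^2$. Your explicit bookkeeping via the decomposition ${\rm Ric}=Z-(n-1)g$ is just a cleaner phrasing of the paper's computation in (\ref{eq3.2}), not a different argument.
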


A 3-dimensional form of this identity appeared in \cite{Lindblom} and in several works since. A related but much more complicated identity was found by Robinson as early as 1975 and used to prove a uniqueness theorem for the Kerr metric \cite{Robinson}. The $n$-dimensional version occurs in \cite{Shen} and was used by \cite{Wang2} to prove his uniqueness result.

\begin{proof} We proceed by direct calculation and application of equations (\ref{eq1.1}, \ref{eq1.2}) and the contracted second Bianchi identity, which in the present case yields $\nabla^i R_{ij}=\frac12\nabla_jR=0$ since $R=-n(n-1)$, and in particular $\nabla^i \left ( R_{ij}+(n-1)g_{ij} \right ) =0$.
\begin{equation}
\label{eq3.2}
\begin{split}
{\rm Div} \left [ \frac{1}{N}\nabla \left ( |dN|^2-N^2+k \right ) \right ] =&\, 2\nabla^i \left [ \left
( \frac{\nabla_i \nabla_j N}{N}-g_{ij} \right ) \nabla^j N\right ]\\
=&\, 2\nabla^i \left [ \left ( R_{ij}+(n-1)g_{ij} \right ) \nabla^j N \right ]\\
=&\, 2 \left ( R_{ij}+(n-1)g_{ij} \right ) \nabla^i \nabla^j N \\
=&\, 2N\left ( R_{ij}+(n-1)g_{ij} \right ) \left ( R^{ij}+ng^{ij}\right )\\
=&\, 2N\left ( R_{ij}+(n-1)g_{ij} \right ) \left ( R^{ij}+(n-1)g^{ij}\right )\\
&\, +2N\left ( R_{ij}+(n-1)g_{ij} \right )g^{ij} \\
=&\, 2N\left \vert {\rm Ric}+(n-1)g \right \vert^2 N+2N \left ( R+ n(n-1) \right )\\
=&\, 2N \left \vert Z \right \vert^2 \ ,
\end{split}
\end{equation}
where in the last equality we used that $R=2\Lambda=-n(n+1)$.\end{proof}

We remark that the quantity inside the operator on the left-hand side of (\ref{eq3.1}) has a simple interpretation. Let ${\tilde g}=g/N^2$. This is sometimes called the \emph{Fermat metric}. Applying (\ref{eq1.1}, \ref{eq1.2}) to the standard formula for the behaviour of scalar curvature under a conformal transformation, one can check that the scalar curvature of ${\tilde g}$ is given by
\begin{equation}
\label{eq3.3}
{\tilde R}=-n(n-1)\left ( |dN|^2-N^2\right )\ ,
\end{equation}
so Lemma \ref{lemma3.1} can be written as
\begin{equation}
\label{eq3.4}
{\rm Div} \left [ \frac{1}{N}\nabla \left ({\tilde R}-n(n-1)k\right ) \right ] = -2n(n-1)N|Z(g)|^2\ .
\end{equation}

\subsection{The proof of Theorem \ref{theorem1.1}}

\begin{proof}[Proof of Part (a).] Now consider the manifold $M_{\epsilon}:= M\backslash \{ x\le\epsilon \}$, the submanifold of $M$ consisting of all points except those ``$\epsilon$-close to conformal infinity''. The boundary $\partial M_{\epsilon}=: \partial_{1/\epsilon}M$ of this set is the hypersurface $x=\epsilon$; the notation indicates that as $\epsilon\to 0$ then $\partial_{1/\epsilon}M$ is replaced by the boundary-at-infinity $\partial_{\infty}M$. If we integrate the identity (\ref{eq3.1}) over $M_{\epsilon}$ and use the divergence theorem, we obtain
\begin{equation}
\label{eq3.5}
\int_{\partial_{1/\epsilon}M} \frac{1}{N} \nabla_{\nu} \left ( |dN|^2-N^2+k \right ) dV(h) = 2\int_{M_{\epsilon}} N \left \vert Z \right \vert^2 dV(g)\ ,
\end{equation}
where $\nu$ is the \emph{outward pointing} unit normal field (pointing toward infinity) and $dV(h)$ is the volume element of the metric
\begin{equation}
\label{eq3.6}
h:= \frac{1}{\epsilon^2} \left ( 1-k\epsilon^2/4\right )^2 g^{(k)}+\frac{1}{n}\kappa \epsilon^{n-2}+{\cal O}(\epsilon^{n-1})
\end{equation}
induced on $\partial_{1/\epsilon}M$ by $g$.

On the other hand, we compute
\begin{equation}
\label{eq3.7}
\begin{split}
\frac{1}{N} \nabla_{\nu} \left ( |dN|^2-N^2+k \right ) =&\, \frac{2}{N}\left [ \left ( \nabla^k N \right ) \left ( \nabla_{\nu}\nabla_k N\right ) -N\nabla_{\nu} N\right ]\\
=&\, 2\left ( \nabla^k N\right ) \left ( R_{jk} + ng_{jk} \right ) \nu^j -2\nabla_{\nu} N\\
=&\, 2\left ( \nabla^k N\right ) \left ( R_{jk} + (n-1)g_{jk} \right ) \nu^j\\
=&\, 2Z(\nu,\nabla N)\\
=&\, 2Z(\nu,\nu) \vert \nabla N\vert \left ( 1+{\cal O}(\epsilon)\right )\ .
\end{split}
\end{equation}
Thus we obtain
\begin{equation}
\label{eq3.8}
\int_{\partial_{1/\epsilon}M} Z(\nu,\nu) \vert \nabla N\vert \left ( 1+{\cal O}(\epsilon)\right ) dV(h) = \int_{M_{\epsilon}} N \left \vert Z \right \vert^2 dV(g)\ .
\end{equation}

Using the Gauss-Codazzi equation, on the level set $\partial_{1/\epsilon}M$ of $x$ we have
\begin{equation}
\label{eq3.9}
Z(\nu,\nu)={\rm Ric}(\nu,\nu)+(n-1)=\frac12 \left ( R-S+H^2-|A|^2 \right ) +n-1\ ,
\end{equation}
where $R$ and $S$ are the intrinsic scalar curvatures of $M$ and $\partial_{1/\epsilon}M$ respectively. Using (\ref{eq2.5}) and (\ref{eq2.6}) on the $x=\epsilon$ hypersurface, then
\begin{equation}
\label{eq3.10}
\begin{split}
Z(\nu,\nu)=&\, \frac12 (n-1)(n-2)\left [ \left ( \frac{1+\frac{k\epsilon^2}{4}}{1-\frac{k\epsilon^2}{4}} \right )^2 -1 \right ] -\frac12 S -\frac12 (n-2)\left ( {\rm tr}_{g^{(k)}}\kappa\right ) \epsilon^n +{\cal O}(\epsilon^{n+1})\\
=&\, \frac12 (n-1)(n-2)\frac{k\epsilon^2}{\left ( 1-\frac{k\epsilon^2}{4}\right )^2}  -\frac12 S -\frac12 (n-2)\left ( {\rm tr}_{g^{(k)}}\kappa\right ) \epsilon^n +{\cal O}(\epsilon^{n+1})\ .
\end{split}
\end{equation}
We must evaluate $S$, the scalar curvature of the metric $h$. To necessary order, it suffices to write that
\begin{equation}
\label{eq3.11}
h=\frac{1}{\epsilon^2} \left ( 1-k\epsilon^2/4\right )^2 \left ( g^{(k)}+{\cal O}(\epsilon^n)\right )
\end{equation}
and
\begin{equation}
\label{eq3.12}
S[h]=
\frac{\epsilon^2}{\left ( 1-k\epsilon^2/4\right )^2}\left ( S[g^{(k)}]+{\cal O}(\epsilon^n) \right )
=\frac{k(n-1)(n-2)\epsilon^2}{\left ( 1-k\epsilon^2/4\right )^2} +{\cal O}(\epsilon^{n+2})\ .
\end{equation}
Then (\ref{eq3.10}) yields
\begin{equation}
\label{eq3.13}
Z(\nu,\nu)=-\frac12 (n-2)\left ( {\rm tr}_{g^{(k)}}\kappa\right )\epsilon^n +{\cal O}(\epsilon^{n+1})\ .
\end{equation}
We insert this into the left-hand side of (\ref{eq3.8}) to obtain
\begin{equation}
\label{eq3.14}
-\frac12 (n-2)\int_{\partial_{1/\epsilon}M} \left [ \left ( {\rm tr}_{g^{(k)}}\kappa\right ) \epsilon^n +{\cal O}(\epsilon^{n+1})\right ] \vert \nabla N\vert \left ( 1+{\cal O}(\epsilon)\right ) dV(h) = \int_{M_{\epsilon}} N \left \vert Z \right \vert^2 dV(g)\ .
\end{equation}
Finally, to prove the theorem, take $\epsilon\to 0$, noting that then $\nabla N\to \nu$ and so $\vert \nabla N\vert\to 1$, and $dV(h)=\epsilon^n dV(g^{(k)})+{\cal O}(\epsilon^{n-1})$. This yields (\ref{eq1.5}). \end{proof}

A much quicker proof in the $n=3$ case is inspired by the observation that our equation (\ref{eq3.3}) is equation (III.15) of \cite{CS} when $n=3$. Simply use equation (\ref{eq3.3}) to replace the left-hand side of (\ref{eq3.5}) by an integral over $\partial_{1/\epsilon} M$ of $\nabla_{\nu} {\tilde R}$ and then use the Chru\'sciel-Simon mass formula \cite[equation (V.23)]{CS}. Our more detailed derivation, however, clearly illustrates the role of the APE assumption and resulting expansion for $g$ and, under the APE assumption, holds manifestly in all dimensions.

\begin{proof}[Proof of Part (b).]
This time we must account for ``finitely distant'' boundary components ${\cal H}_{i}$, defined as the locus $N=0$. To avoid division by zero, we displace those components slightly into $M$, say by moving each component ${\cal H}_{i}$ a distance $\epsilon$ along the geodesic congruence orthogonal to it; we call the displaced hypersurface ${\cal H}_{i,\epsilon}$ and define ${\cal H}_{\epsilon}:=\cup_i {\cal H}_{i,\epsilon}$. We then redefine $M_{\epsilon}$ to be the connected submanifold of $M$ whose boundary is $\partial M_{\epsilon}:={\cal H}_{\epsilon}\cup \partial_{1/\epsilon}M$ where, as before, $\partial_{1/\epsilon}M=\{ p\in M; x(p)=\epsilon\}$ and $x$ is as usual our special defining function for the boundary-at-infinity.

Then (\ref{eq3.5}) is replaced by
\begin{equation}
\label{eq3.15}
\begin{split}
\int_{\partial_{1/\epsilon}M} \frac{1}{N} \nabla_{\nu} \left ( |dN|^2-N^2+k \right ) dV(h)- \sum_i \int_{{\cal H}_{i,\epsilon}} \frac{1}{N} \nabla_{\nu} \left ( |dN|^2-N^2+k \right ) dV(h)\\
= 2\int_{M_{\epsilon}} N \left \vert Z \right \vert^2 dV(g)\ .
\end{split}
\end{equation}
On ${\cal H}_{\epsilon}$ the unit normal field is chosen to point into $M_{\epsilon}$, so it again points toward infinity.

Equation (\ref{eq3.7}) remains valid at the inner boundary ${\cal H}_{\epsilon}$. The Gauss-Codazzi relation (\ref{eq3.9}) also holds. Furthermore, by \cite[Theorem 2.1.i]{GM}, ${\cal H}$ is necessarily an embedded, totally geodesic hypersurface and $|\nabla N|=:\vartheta_i$ is constant on each component ${\cal H}_{i}$, so we can write $|\nabla N|=\vartheta_i\left ( 1+{\cal O}(\epsilon) \right )$ on ${\cal H}_{i,\epsilon}$ and then $\vartheta_i$ can come outside the integral. Taking $\epsilon\to 0$, the Gauss-Codazzi relation becomes simply $2Z(\nu,\nu)=-(n-1)(n-2)-S$. Then the first term  in (\ref{eq3.15}) yields $-16\pi (n-1) m$ while the second term reduces to $\sum_i \left [ (n-1)(n-2)|{\cal H}_{i}|+\int_{{\cal H}_{i}} S\, dV_{{\cal H}_{i}}\right ] \vartheta_i$, and so we obtain (\ref{eq1.7}).\end{proof}

\subsection{The mass aspect}

\noindent We consider now the spacetime metric $-N^2 dt^2+g$ constructed from $N$ and $g$. We wish to apply \emph{asymptotically anti-de Sitter} boundary conditions to this metric in order to understand the properties of the \emph{mass aspect function}
\begin{equation}
\label{eq3.16} \mu:={\rm tr}_{g^{(k)}}\kappa\ ,
\end{equation}
defined on the boundary-at-infinity. Since the metric is static, one way to proceed is to impose the APE condition on the Riemannian metric ${\bar g}=N^2dt^2+g$ on a neighbourhood of infinity in $X^{n+1}$. Indeed, this metric will be not merely APE but exactly Poincar\'e-Einstein, but in what follows we will not need to apply the Einstein equations beyond APE order.

Because the spacetime metric is static, if $x$ is a special defining function for $(M^n,g)$ then it is also a special defining function for $(X^{n+1},{\bar g})$. We may take $X^{n+1}\simeq I\times M^n$ where $I$ is $S^1$ or ${\mathbb R}$. Since $g$ is still subject to the APE condition on $M$, its leading terms up to order $x^{n-3}$ inclusive (order $x^{n-1}$ inclusive in the standard counting which refers to the conformal metric $x^2g$) are determined as before. At this stage, the metric on $X$ is
\begin{equation}
\label{eq3.17}
\begin{split}
{\bar g} =&\, \frac{1}{x^2} \left [ V dt^2+dx^2+\left ( 1-\frac{kx^2}{4} \right )^2 g^{(k)} +\frac{1}{n}x^n\kappa + {\cal O}(x^{n+1})\right ]\ ,\\
V:=&\, x^2 N^2\ .
\end{split}
\end{equation}
Now the APE condition for $(X,{\bar g})$ can be applied to determine $V$, and thus $N$, order-by-order. First, since $\dim X=n+1$, this condition fixes all the coefficients $a_{[j]}$ in the expansion ${\bar g}=\frac{1}{x^2}\sum_{j=0}^{\infty} a_{[j]}x^j$ up to $j=n-1$ inclusive once $a_{[0]}$ is specified. This is in fact simple, and yields
\begin{equation}
\label{eq3.18}
V= \left ( 1+\frac{kx^2}{4} \right )^2+v_{[n]}x^n+{\cal O}(x^{n+1})\ ,
\end{equation}
and we note that when $n=3$ the above expression contains an explicit $k^2x^4/16$ term which belongs to ${\cal O}(x^{n+1})$ and so can be ignored. The coefficient $v_{[n]}$ is as yet undetermined, but we can find it using the Einstein equation at order $j=n$. Unlike at lower orders, at this order the Einstein equation fixes only the trace of the $j=n$ term, but that suffices. It reads
\begin{equation}
\label{eq3.19}
v_{[n]}+\frac{1}{n}{\rm tr}_{g^{(k)}}\kappa=0\ .
\end{equation}
Then, since $N^2=V/x^2$, we have
\begin{equation}
\label{eq3.20}
\begin{split}
&\, N^2=\frac{1}{x^2}\left [ \left ( 1+\frac{kx^2}{4}\right )^2 -\frac{x^n}{n}{\rm tr}_{g^{(k)}}\kappa +{\cal O}(x^{n+1})\right ]\ , \\
\Rightarrow &\, dN= -\frac{1}{x^2}\left [ 1-\frac{kx^2}{4} +\frac{(n-1)}{2n}x^n{\rm tr}_{g^{(k)}}\kappa +{\cal O}(x^{n+1})\right ] dx \ ,\\
\Rightarrow &\, |dN|^2-N^2+k= x^{n-2} {\rm tr}_{g^{(k)}}\kappa +{\cal O}(x^{n-1})=\mu x^{n-2}+{\cal O}(x^{n-1})\ .
\end{split}
\end{equation}
Recalling (\ref{eq3.3}), one can now obtain the Chru\'sciel-Simon mass formula. Furthermore, we can now prove the following result.

\begin{theorem}\label{theorem3.2} Let $(M,g,N)$ be a solution of (\ref{eq1.1}, \ref{eq1.2}) such that $-N^2 dt^2\oplus g$ is static Einstein and asymptotically anti-de Sitter in the above sense, with $(M,g)$ complete and $N>0$. Let $\partial_{\infty}M$ be the boundary-at-infinity of $(M,g)$. Then $\mu \le 0$ pointwise on $\partial_{\infty}M$.
\end{theorem}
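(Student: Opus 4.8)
The plan is to recast the divergence identity of Lemma~\ref{lemma3.1} as a maximum-principle statement for the scalar function $u:=|dN|^2-N^2+k$. By (\ref{eq3.20}) this function satisfies $u=\mu x^{n-2}+{\cal O}(x^{n-1})$ near the boundary-at-infinity, so since $n\ge 3$ we have $u\to 0$ as $x\to 0$, and because $x^{n-2}>0$ the pointwise sign of $\mu$ is governed by the limiting sign of $u/x^{n-2}$. Thus it suffices to prove that $u\le 0$ throughout $M$.

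To obtain such a bound, I would first rewrite Lemma~\ref{lemma3.1} in non-divergence form. Expanding ${\rm Div}\left[\frac1N\nabla u\right]=\frac1N\Delta u-\frac{1}{N^2}\langle\nabla N,\nabla u\rangle$ and using (\ref{eq3.1}), multiplication by $N>0$ gives
\begin{equation}
Lu:=\Delta u-\langle\nabla\log N,\nabla u\rangle=2N^2|Z|^2\ge 0\ .
\end{equation}
The operator $L$ is a second-order elliptic operator with smooth coefficients and, crucially, \emph{no} zeroth-order term, since $N>0$ makes $\log N$ smooth. Hence $u$ is an $L$-subsolution on all of $M$, and the weak maximum principle applies on any compact domain: $u$ attains its maximum over that domain on its boundary.

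The final step is to exhaust $M$ by the compact sets $\overline{M_{\epsilon}}=\{x\ge\epsilon\}$. These are compact because the conformal compactification $M\cup\partial_{\infty}M$ is compact and $\{x\ge\epsilon\}$ is closed and disjoint from $\partial_{\infty}M$; recall that in the setting of Part~(a) the finite boundary $\partial M$ is empty, so $\partial M_{\epsilon}=\partial_{1/\epsilon}M$ only. On $\partial_{1/\epsilon}M$ the expansion (\ref{eq3.20}) gives $\max_{\partial_{1/\epsilon}M}u={\cal O}(\epsilon^{n-2})$. By the weak maximum principle, for any fixed $p\in M$ and all sufficiently small $\epsilon$ we have $u(p)\le\max_{\partial_{1/\epsilon}M}u={\cal O}(\epsilon^{n-2})$, and letting $\epsilon\to 0$ yields $u(p)\le 0$. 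Therefore $u\le 0$ on $M$, and dividing $u=\mu x^{n-2}+{\cal O}(x^{n-1})$ by $x^{n-2}>0$ and taking $x\to 0$ gives $\mu\le 0$ pointwise on $\partial_{\infty}M$.

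I expect the principal obstacle to be conceptual rather than computational: recognizing that the Lindblom-type identity, used earlier only \emph{after} integration to extract the total mass, can instead be read locally as the sign-definite condition $Lu\ge 0$ on the Fermat-type quantity $u$, thereby upgrading an integrated mass bound to a pointwise one. The remaining technical points — ellipticity of $L$ on the compact pieces, compactness of $\{x\ge\epsilon\}$, and the ${\cal O}(\epsilon^{n-2})$ control of the boundary values supplied by the APE expansion — are routine, but one should verify that the decay of $u$ at infinity is genuinely uniform in the boundary directions, so that the exhaustion argument delivers the pointwise rather than merely the averaged inequality.
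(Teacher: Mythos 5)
Your proposal is correct and follows essentially the same route as the paper: both recast Lemma \ref{lemma3.1} as the differential inequality $\bigl(\Delta-\tfrac{1}{N}\nabla_{\nabla N}\bigr)f=2N^2|Z|^2\ge 0$ for $f=|dN|^2-N^2+k=\mu x^{n-2}+{\cal O}(x^{n-1})$ and conclude via a maximum principle using the decay of $f$ at infinity. The only (immaterial) difference is that you apply the weak maximum principle on an exhaustion by the compact sets $\{x\ge\epsilon\}$ to obtain $f\le 0$ directly, whereas the paper argues by contradiction with the Hopf strong maximum principle; your variant in fact yields the paper's Remark \ref{remark3.3} along the way.
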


\begin{proof}
Let
\begin{equation}
\label{eq3.21}
f:=|dN|^2-N^2+k= \mu x^{n-2}+{\cal O}(x^{n-1})\ .
\end{equation}
Then $f\to 0$ as $x\searrow 0$. If there were a point of $\partial_{\infty}M$ where $\mu>0$, then by continuity there would be a ``nearby'' point $p\in M$ where $f(p)>0$. Then $f$ would achieve a positive maximum in $M$. But from (\ref{eq3.2}) we may write
\begin{equation}
\label{eq3.22}
\left ( \Delta -\frac{1}{N}\nabla_{\nabla N} \right ) f =2N^2|Z|^2\ge 0\ ,
\end{equation}
and then by the Hopf strong maximum principle \cite[Theorem 3.5, p 35]{GT}, $f$ would necessarily be identically zero in $M$, contradicting $\mu>0$. Hence $\mu\le 0$.
\end{proof}

\begin{remark}\label{remark3.3} The argument above also implies that $f\le 0$ on $M$ and so, by (\ref{eq3.3}), the scalar curvature of the Fermat metric obeys ${\tilde R}\ge n(n-1)k$.
\end{remark}

\begin{remark}\label{remark3.4}
The conclusion of Theorem \ref{theorem3.2} remains valid in the horizon case (where the horizon is given by the zero set of $N$), provided $k = -1$ and the surface gravity is sufficiently small; specifically, $|dN| \le 1$ so that from
(\ref{eq3.21}) $f \le 0$ at the horizon. Then if $f$ had a positive maximum, it would necessarily occur at an interior point. As before, this contradicts the strong maximum principle.
\end{remark}

\section{Renormalized volume and Theorem \ref{theorem1.2}}
\setcounter{equation}{0}

\subsection{The Pfaffian of the curvature 2-form}

\noindent Now consider the $(n+1)$-dimensional Riemannian manifold $X\simeq S^1\times M$ with metric $\gamma:=N^2 dt^2\oplus g$, where $g$ and $N$ obey (\ref{eq1.1}, \ref{eq1.2}). This may be considered to be the ``Wick rotated'' spacetime built from $g$ and $N$. It is therefore Einstein, and
\begin{equation}
\label{eq4.1}
{\rm Ric}(\gamma)+n\gamma=0\ .
\end{equation}
As well, $\frac{\partial}{\partial t}$ is a global Killing vector field tangent to copies of the $S^1$ factor.

Denoting the coordinates by $(t,x^i)$, $i\in {1,\dots n}$, it is easy to compute that
\begin{equation}
\label{eq4.2}
\begin{split}
R_{0i0j}(\gamma)=&\, -N\nabla_i\nabla_j N\ ,\\
R_{0ijk}(\gamma)=&\, 0\ ,\\
R_{ijkl}(\gamma) =&\, R_{ijkl}(g)\ ,
\end{split}
\end{equation}
where $\nabla$ is the Levi-Civita connection of $g$. It follows that
\begin{equation}
\label{eq4.3}
\left \vert {\rm Riem}(\gamma) \right \vert_{\gamma}^2 = \frac{4}{N^2} \left \vert {\rm Hess} N \right \vert_g^2 +\left \vert {\rm Riem}(g) \right \vert_g^2\ .
\end{equation}
Using (\ref{eq1.1}), we have that ${\rm Hess}N\equiv \nabla^2 N =N\left ({\rm Ric}(g) + ng\right ) = \left ( Z(g)+g\right )N$, and so
\begin{equation}
\label{eq4.4}
\left \vert {\rm Riem}(\gamma) \right \vert_{\gamma}^2 = \left \vert {\rm Riem}(g) \right \vert_g^2+4\left \vert Z(g) \right \vert_g^2+4n\ .
\end{equation}
If we use the standard decomposition of ${\rm Riem}(g)$ into its Weyl, tracefree Ricci, and scalar curvature parts, and use that the scalar curvature of $g$ is $-n(n-1)$, we arrive at
\begin{equation}
\label{eq4.5}
\left \vert {\rm Riem}(\gamma) \right \vert_{\gamma}^2 = \left \vert {\rm Weyl}(g) \right \vert_g^2 +8\left \vert Z(g) \right \vert_g^2+2n(n+1)\ .
\end{equation}

Now choose $n=3$, so that $(X,\gamma)$ is a 4-manifold. Then the Pfaffian of the curvature 2-form of $(X,\gamma)$ is
\begin{equation}
\label{eq4.6}
{\rm Pfaff}(X,\gamma) = \left \vert {\rm Riem}(\gamma) \right \vert_{\gamma}^2 -4 \left \vert Z(\gamma) \right \vert_{\gamma}^2\ .
\end{equation}
We define the renormalized Pfaffian by
\begin{equation}
\label{eq4.7}
{\rm RenPf}(X,\gamma):={\rm Pfaff}(X,\gamma)-24\ .
\end{equation}
Using that $(X,\gamma)$ is Einstein, then $Z(\gamma)=0$. Furthermore, using (\ref{eq4.5}) and the fact that ${\rm Weyl}(g)=0$ since $(M,g)$ is a 3-manifold, then we get that
\begin{equation}
\label{eq4.8}
{\rm RenPf}(X,\gamma) = 8\left \vert Z(g) \right \vert_g^2\ .
\end{equation}

\subsection{The proof of Theorem \ref{theorem1.2} Part (a).}

\noindent
\begin{proof} Now Anderson's formula \cite{Anderson} for the \emph{renormalized volume} of a Poincar\'e-Einstein 4-manifold is
\begin{equation}
\label{eq4.9}
{\rm RenV}(X,\gamma)= \frac{4\pi^2}{3}\chi(X) -\frac{1}{24}\int_X {\rm RenPf}(X,\gamma) dV(\gamma)\ ,
\end{equation}
where\footnote
{We use tensor norms throughout, whereas the norms used in \cite{Anderson} are those used for forms. As a result, numerical coefficients in \cite[equations (0.1) and (1.26)]{Anderson} differ from those in equations (\ref{eq4.5}), (\ref{eq4.6}), and (\ref{eq4.10}).}
$\chi(X)$ is the Euler characteristic of $X$. Since $X$ is a product with an $S^1$ factor, $\chi(X)$ is zero here. Furthermore, ${\rm RenV}(X,\gamma)$ is independent of the coordinate $t$ for the $S^1$ factor, so the integral over $X$ becomes $\beta$ times an integral over $M$ with measure $NdV(g)$. Then from (\ref{eq4.8}) and (\ref{eq4.9}), we obtain
\begin{equation}
\label{eq4.10}
{\rm RenV}(X,\gamma)= -\frac{\beta}{3}\int_M \left \vert Z(g) \right \vert_g^2 NdV(g)
= \frac{8\pi}{3}\beta m \le 0 \ ,
\end{equation}
where the last equality follows from Theorem \ref{theorem1.1}. \end{proof}

\subsection{An example}

\noindent A rather nice example of Lemma \ref{theorem1.2} is afforded by the AdS soliton metric \cite{HM}
\begin{equation}
\label{eq4.11}
\gamma = r^2 dt^2+ \frac{dr^2}{r^2 \left ( 1-\frac{a^3}{r^3} \right )}+r^2 \left ( 1-\frac{a^3}{r^3} \right )d\xi^2+r^2d\theta^2\ ,
\end{equation}
with $a>0$, $r\in [a,\infty)$, and $\xi\in  [ 0, 4\pi/3 ]$. The domains of $t$ and $\theta$ can be chosen arbitrarily, so we take $\theta\in [0,\Theta]$ and, to conform with Theorem \ref{theorem1.2}, $t\in [0,\beta]$ (we need here that $x\sim 1/r$, as verified immediately below). The Killing vector of interest is $\frac{\partial}{\partial t}$. It is an easy matter to compute the mass of the $t=0$ slice using Wang's formula. One finds
\begin{equation}
\label{eq4.12}
m=-4\pi \Theta\beta/3\ .
\end{equation}
Note that it is negative.

To verify Theorem \ref{theorem1.2}, one can compute the renormalized volume of this manifold using Anderson's formula, but a more independent method is to compute it using Hadamard's regularization of the volume integral. To begin, we integrate $dx/x= dr/\left [ r\left ( 1-1/r^3\right ) \right ]$ to find a special defining function
\begin{equation}
\label{eq4.13}
x=\left ( r^{3/2}-\sqrt{r^3-1} \right )^{2/3}\ .
\end{equation}
Note that $x\in [0,1]$ and also that $x\sim 1/r$ as needed above. Using $x$ as a coordinate, we can write the metric as
\begin{equation}
\label{eq4.14}
\gamma= \frac{dx^2}{x^2}+\frac{1}{4^{2/3}}\frac{\left ( x^{-3/2}-x^{3/2} \right )^2}{\left ( x^{-3/2}+x^{3/2}\right )^{2/3}}d\xi^2 +\frac{1}{4^{2/3}}\left ( x^{-3/2}+x^{3/2}\right )^{4/3}\left ( dt^2+d\theta^2\right ) \ .
\end{equation}
More to the point, the volume element is
\begin{equation}
\label{eq4.15}
dV(\gamma)=\frac{1}{4x}\left ( x^{-3}-x^3\right ) dx dt d\xi d\theta\ ,
\end{equation}
keeping in mind that $x^{-3}\ge x^3$ for $x\in (0,1]$. We integrate this volume form using the above coordinate domains but truncating the $x$ domain to $x\in [\epsilon,1]$ for some $\epsilon>0$. Then
\begin{equation}
\label{eq4.16}
V_{\epsilon}= \frac{4\pi}{3}\Theta\beta \int_{\epsilon}^1 \frac{1}{4}\left ( x^{-4}-x^2 \right ) dx
= \frac{\pi}{9}\Theta\beta\left ( \frac{1}{\epsilon^3} - 2+\epsilon^3\right )\ .
\end{equation}
We take the Hadamard finite part as $\epsilon\to 0$, which means here that we simply remove the $1/\epsilon^3$ term before taking the limit. We get
\begin{equation}
\label{eq4.17}
{\rm RenV}(X,\gamma) = {\rm PF}_{\epsilon\to 0} V_{\epsilon}= -\frac{2\pi}{9}\Theta\beta\ ,
\end{equation}
so by comparing (\ref{eq4.17}) with (\ref{eq4.12}) we see that ${\rm RenV}(X,\gamma)=m\beta/6$, as required.

\subsection{The proof of Theorem \ref{theorem1.2} Part (b).}

\noindent
\begin{proof} Assume now that $N$ has a non-empty zero set which is a closed, connected hypersurface ${\cal H}$ with surface gravity $\vartheta=|dN|_{\cal H}$. Equations (\ref{eq4.8}), (\ref{eq4.9}), and (\ref{eq1.8}) immediately yield
\begin{equation}
\label{eq4.18}
\begin{split}
{\rm RenV}(X,\gamma)= &\, \frac{4\pi^2}{3} \left [ \chi(X)-\frac{\beta\vartheta}{2\pi}\chi({\cal H})\right ] +\frac{8\pi}{3} \left [ m\beta-\frac{\beta\vartheta}{2\pi}\frac{\vert{\cal H}\vert}{4}\right ]\\
= &\, \frac{4\pi^2}{3} \left [ 1-\frac{\beta\vartheta}{2\pi}\right ]\chi({\cal H})+\frac{8\pi}{3} \left [ m\beta-\frac{\beta\vartheta}{2\pi}\frac{\vert{\cal H}\vert}{4}\right ]\ ,
\end{split}
\end{equation}
where $\beta$ is the circumference of the Killing orbits at infinity (in the conformal metric). and where we used that $X\simeq {\mathbb R}^2\times {\cal H}$ so that $\chi(X)=\chi({\cal H})$.

In a neighbourhood of ${\cal H}$, we can use Gaussian normal coordinates $(\rho,x^a)$ for the metric $g$. Then the metric $\gamma=N^2 dt^2\oplus g$ can be written as
\begin{equation}
\label{eq4.19}
\gamma=N^2(\rho,x^c) dt^2 +d\rho^2+h_{ab}(\rho,x^c) dx^adx^b\ ,
\end{equation}
where $N(0,x^c)=0$ and $h_{ab}(0,x^c)$ is the metric induced on ${\cal H}$. Recall that we identify $t\sim t+\beta$ and we have $\vert dN\vert_{\cal H} =: \vartheta \neq 0$. Then $\vartheta=\frac{\partial N}{\partial \rho}(0,x^c)\neq 0$. Smoothness of $\gamma$ then requires that $d\rho^2+N^2 dt^2\sim d\rho^2+\rho^2 d\xi^2$ at $\rho=0$, with $\xi\in [0,2\pi]$. Therefore, $\vartheta=\frac{2\pi}{\beta}$. Inserting this in (\ref{eq4.18}), the coefficient of $\chi({\cal H})$ vanishes and the coefficient of $|{\cal H}|/4$ simplifies to $1$, yielding equation (\ref{eq1.10}). \end{proof}

\section{Open problems}
\setcounter{equation}{0}

\noindent We close by discussing some open problems. One that was raised a long time ago and which remains open to this day is prompted by the example of Section 4.3 except with $dt^2$ replaced by $-dt^2$ in equation (4.10). These are the AdS soliton metrics \cite{HM}, which are nontrivial, globally static, negative mass solutions of (\ref{eq1.1}, \ref{eq1.2}) with toroidal ($k=0$) boundary at infinity. We will refer to the static slices as \emph{Horowitz-Myers geons} \cite{BW}. These slices have negative mass. They evade the positive mass theorem for spin manifolds, even though they are spin, because the spinor structure does not admit asymptotically constant solutions of the Witten equation. It is a conjecture of Horowitz and Myers that, amongst all APEs with scalar curvature $R\ge -n(n-1)$ and the same flat torus as conformal infinity, the minimizer of the Wang mass is a Horowitz-Myers geon.\footnote
{Note that in dimension $n\ge 4$ there is more than one such Horowitz-Myers geon filling in a given toroidal boundary-at-infinity. The conjectural minimizer is constructed by attaching the bulk manifold to the boundary torus in such a manner that the shortest cycle on the torus becomes contractible to a point in the interior.}
The original conjecture was motivated by the AdS/CFT correspondence and gauge theory arguments, and because of this it was posed in the fixed dimension $n=4$, but it seems to us no less plausible in other dimensions.

\medskip
\noindent {\bf Problem 5.1.} Choose a fixed flat $n-1$ torus, $n\ge 3$. Find a complete APE $n$-manifold with scalar curvature $R\ge = -n(n-1)$, with inner boundary empty or a compact minimal hypersurface, which has this flat torus as its boundary-at-infinity and which has Wang mass less than that of any Horowitz-Myers geon with the same boundary-at-infinity, or prove that no such $n$-manifold exists. This is basically a generalization in dimension of the problem of proving or disproving Conjecture 3 of \cite{HM}.
\medskip

\noindent In this regard, we recall that Gibbons once attempted to address this conjecture using the monotonicity of a generalized Hawking mass under the inverse mean curvature flow \cite{Gibbons}. He fixed $n=3$ and studied the behaviour of a quantity that can be written using our conventions as
\begin{equation}
\label{eq5.1}
m_H(\Sigma):=\frac{\vert \Sigma \vert^{1/2}}{64\pi^{3/2}} \int_{\Sigma}\left ( 2S-H^2+4\right )dV(h) \ ,
\end{equation}
where $\Sigma$ is a closed embedded hypersurface and $\vert \Sigma\vert :=\int_{\Sigma}dV(h)$. This quantity exhibits \emph{Geroch monotonicity} \cite{Geroch}: it is monotonic under mean curvature flow, so that $m_H(\Sigma_2)\ge m_H(\Sigma_1)$ whenever $\Sigma_2$ is obtained by evolving $\Sigma_1$ outward (toward infinity) by mean curvature flow. Gibbons's technique failed to resolve the conjecture because this quantity diverges to $-\infty$ on the central circle of the AdS soliton.

For sake of comparison, let $Q(\Sigma)$ denote $-\frac{1}{8\pi^{3/2}}$ times the quantity on the left-hand side of (\ref{eq3.7}), integrated over $\Sigma$. Again, we set $n=3$. Using (\ref{eq3.7}) and (\ref{eq3.9}), we obtain
\begin{equation}
\label{eq5.2} Q(\Sigma)= \frac{1}{64\pi^{3/2}}\int_{\Sigma} \left ( 2S-H^2+4+2|A^{\rm TF}|\right )|dN| dV(h)\ ,
\end{equation}
where $A^{\rm TF}$ is the trace-free part of the second fundament form $A$, and by integrating (\ref{eq3.2}) we have the monotonicity $Q(\Sigma_2)\ge Q(\Sigma_1)$ \emph{whenever} $\Sigma_2$ lies between $\Sigma_1$ and the boundary-at-infinity, without regard to whether these surfaces are related by mean curvature flow. Note that (\ref{eq5.1}) and (\ref{eq5.2}) are quite similar, but where $\ref{eq5.1}$ has $|\Sigma |^{1/2}$, (\ref{eq5.2}) has $|dN|$. This seems to help, because it implies that $Q(\Sigma)$ tends to zero as $\Sigma$ shrinks down to the central circle, in contrast to $m_H$. Also, (\ref{eq5.2}) contains an $A^{\rm TF}$ term, which seems to be a necessary modification to the Geroch monotonicity argument, because this term will not be zero for Horowitz-Myers geons. However, as $\Sigma$ approaches conformal infinity, $Q$ approaches the \emph{negative} of the mass (times a positive constant), yielding only Theorem \ref{theorem1.1} and not a proof of the Horowitz-Myers conjecture.

In light of the conjecture of Horowitz and Myers, then it seems reasonable to wonder whether there is a version when the boundary-at-infinity is a compact hyperbolic surface; i.e., when $k=-1$. It is known that there are static metrics with a horizon in this case \cite{Mann, BLP}, and they form families with mass bounded below, but the mass does become negative along these families. The lower bound is realized by a cold horizon. It is not known if there are any complete metrics of this form without a horizon.

\medskip
\noindent {\bf Problem 5.2.} Choose a fixed closed, orientable, connected, hyperbolic $(n-1)$-manifold $\Sigma$, $n\ge 3$. Find a complete, boundaryless $n$-manifold $(M,g)$ and a positive function $N$ such that
\begin{enumerate}
\item $(M,g)$ is APE, with boundary-at-infinity isometric to $\Sigma$, and
\item $(M,g,N)$ obeys the system $(\ref{eq1.1}, \ref{eq1.2})$ (equivalently, (\ref{eq1.4})), with $|dN|\to 0$ at infinity,
\end{enumerate}
or show that no such $(M,g,N)$ exists.
\medskip

\noindent If the boundary at infinity were a torus, the solution of this problem of course would be a Horowitz-Myers geon $(M,g)$, together with $N$ such that $-N^2 dt^2\oplus g$ is an AdS soliton.

We know that any metric solving Problem 5.2 must have negative Wang mass, or zero Wang mass if $N\equiv 0$ since the special case of $N\equiv 0$ would be a Poincar\'e-Einstein manifold. We do not know if such a Poincar\'e-Einstein manifold exists when $k=-1$.

\medskip
\noindent \noindent {\bf Problem 5.3.} Choose a fixed closed, orientable, connected, hyperbolic $(n-1)$-manifold $\Sigma$, $n\ge 3$. Find a complete, boundaryless Poincar\'e-Einstein $n$-manifold $(M,g)$ with boundary-at-infinity isometric to $\Sigma$, or show that no such $(M,g)$ exists.
\medskip

\noindent If $k$ were $1$, the solution of this problem would be standard hyperbolic $n$-space. The Riemannian AdS solitons (\ref{eq4.11}) provide nontrivial solutions when $k=0$. Thus the $k=-1$ case is intriguing.

An obvious direction in which to attempt to generalize the mass formula, the renormalized volume calculation, and some of the problems listed above is to attempt to pass to the stationary but non-static case. In this case, the system (\ref{eq1.1}, \ref{eq1.2}) is replaced by a somewhat more complicated system which we will not write down, whose solution includes a metric $g$ and lapse function and an additional vector field, the so-called \emph{shift vector}. The Killing vector field $\frac{\partial}{\partial t}$ is no longer hypersurface orthogonal, but the quotient of spacetime by the vector field is smooth and it is on this quotient that $g$ is a metric. An obvious first question is

\medskip
\noindent \noindent {\bf Problem 5.4.} What are the generalizations of Theorems \ref{theorem1.1} and \ref{theorem1.2} that apply to solutions stationary Einstein equations?
\medskip

\noindent One would expect such generalizations to look like the formulas of Theorems \ref{theorem1.1} and \ref{theorem1.2}, but with additional terms which depend on the curl of the shift vector.

In the spirit of Problem 5.2, and motivated by a classic result of Lichn\'erowicz \cite[p 142]{Lichnerowicz} for asymptototically flat spacetimes, an intriguing question concerning the AdS solitons is whether it is possible to endow them with rotation about the central axis (the locus $r=1$ in (\ref{eq4.11}) with $dt^2$ replaced by $-dt^2$), without formation of either a horizon or a naked singularity. The resulting spacetime would be a nontrivial stationary AdS soliton.

\medskip
\noindent {\bf Problem 5.5.} Are there families of $(n+1)$-dimensional, stationary vacuum spacetimes which include an AdS soliton and are asymptotically locally anti-de Sitter with the same conformal infinity as this soliton?
\medskip

\end{document}